\newcommand{\cX}{{\cal X}}
\newcommand{\cC}{{\cal C}}
\newcommand{\cS}{{\cal S}}
\newcommand{\cD}{{\cal D}}
\newcommand{\mR}{{\mathbb R}}
\newcommand{\mU}{{\mathbb U}}
\newcommand{\bs}{{\mathbf s}}
\newcommand{\by}{{\mathbf y}}
\newcommand{\bff}{{\mathbf f}}
\newcommand{\bw}{{\mathbf w}}
\newcommand{\bx}{{\mathbf x}}
\newcommand{\bA}{{\mathbf A}}
\theoremstyle{plain}
\newtheorem{theorem}{\textbf{Theorem}}
\newtheorem{definition}{\textbf{Definition}}
\newtheorem{proposition}{\textbf{Proposition}}
\theoremstyle{definition}
\newtheorem{assumption}{Assumption}
\newtheorem{remark}{\textbf{Remark}}
\DeclareMathOperator*{\argmin}{arg\,min}
\begin{document}

\title{Extensions of the Path-integral formula for computation of Koopman eigenfunctions}

\author{Shankar A. Deka and Umesh Vaidya
\thanks{Shankar A. Deka is with the Department of Electrical Engineering and Automation, School of Electrical Engineering, Aalto University, 02150 Espoo, Finland (Email: {\tt\small shankar.deka@aalto.fi}). Umesh Vaidya are with the Dept. of Mechanical Engineering., Clemson University, Clemson SC. (Email: {\tt\small uvaidya@clemson.edu}. UV will like to acknowledge financial support from the NSF 2031573 grant.)
}
}

\maketitle

\begin{abstract}
Representing nonlinear dynamical systems using the Koopman Operator and its spectrum has distinct advantages in terms of linear interpretability of the model as well as in analysis and control synthesis through the use of well-studied techniques from linear systems theory. As such, efficient computation of Koopman eigenfunctions is of paramount importance towards enabling such Koopman-based constructions. To this end, several approaches have been proposed in literature, including data-driven, convex optimization, and Deep Learning-based methods. In our recent work, we proposed a novel approach based on path-integrals that allowed eigenfunction computations using a closed-form formula. In this paper, we present several important developments such as finite-time computations, relaxation of assumptions on the distribution of the principal Koopman eigenvalues, as well as extension towards saddle point systems, which greatly enhance the practical applicability of our method. 
\end{abstract}


\section{Introduction}\label{sec:Intro}
The use of operator theoretic methods involving the Koopman operator has become popular in the last decade. The Koopman operator provides for a linear (albeit possibly infinite dimensional) lifting of nonlinear system dynamics. Linear representation of nonlinear system dynamics leads to the development of systematic methods for data-driven identification, analysis, and control of nonlinear dynamics\cite{peitz2019koopman,villanueva2021towards,borggaard2009control, abraham2019active,korda2018linear, sootla2018optimal,fackeldey2020approximative,kaiser2021data,ma2019optimal,huang2022convex,deka2022koopman}. The main power of Koopman theory lies in its spectrum, i.e., eigenvalues and eigenfunctions. While the Koopman operator is used to study statistical or ensemble behavior of a dynamical system, the spectrum of the Koopman operator, particularly the principal spectrum, can be used to recover the geometrical structure of the underlying system. In particular, stable and unstable manifolds can be recovered as joint zero-level curves of Koopman eigenfunctions \cite{mezic2020spectrum}. These manifolds find their utility in reduced-order modeling, identifying stability boundaries, and optimizing the control design of dynamic systems \cite{vaidya2022spectral,umathe2023spectral,umathe2022reachability}. 

\begin{figure}[!htp]
    \centering
    \includegraphics[width=0.9\linewidth]{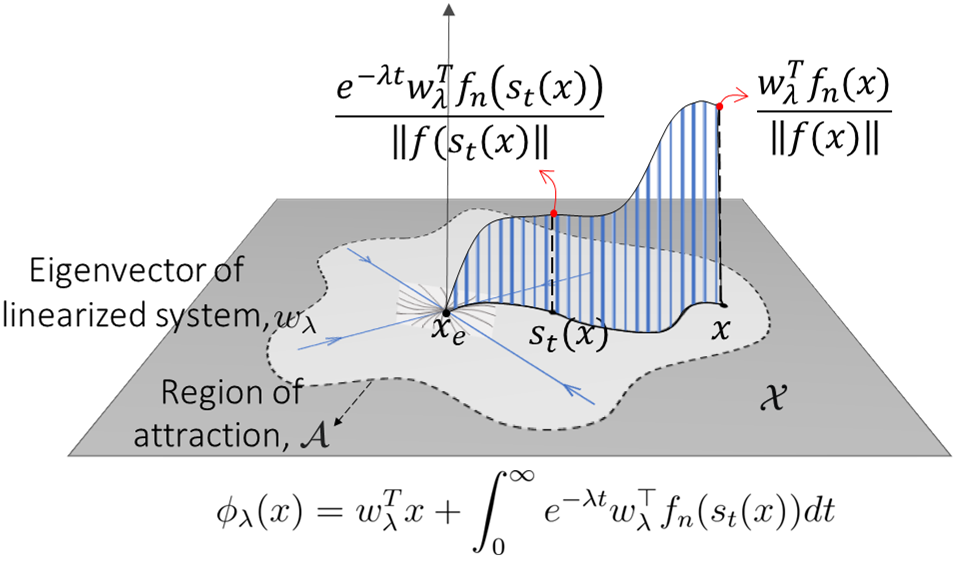}
    \caption{Path-integral formula for computing eigenfunction $\phi_\lambda$ corresponding to eigenvalue $\lambda$ \cite{deka2023path}. The integral term is visualized as the blue-shaded area. Please see section \ref{sec:PIF} for notations.}
    \label{fig:PIF}
    \vspace{-0.5cm}
\end{figure}

Given the significance of Koopman eigenfunctions, various computational methods have been developed to approximate it \cite{housparse,klus2020eigendecompositions,korda2018convergence,korda2020optimal,williams2015data}. Most existing methods rely on estimating a finite-dimensional representation of the Koopman operator for the approximate computation of the Koopman spectrum. While this approach is more streamlined, it may be computationally inefficient. Furthermore, the eigenfunctions approximated using the Koopman operator may lead to an inaccurate approximation of the principal eigenfunction of the Koopman operator. The principal eigenfunctions of the Koopman operator are associated with the eigenvalues of the linearization of the dynamical system at the origin. A system with a $n$ dimensional state space has only $n$ principal eigenfunctions. Thus, generating the Koopman spectrum directly from these $n$ principal eigenfunctions greatly reduces the computational burden. 

For all these reasons, it becomes essential to develop direct methods for approximating the principal spectrum of the Koopman operator. In \cite{deka2023path}, we proposed a path-integral formula for computation of the principal Koopman spectrum. We showed that the principal eigenfunctions of the Koopman operator satisfy a linear partial differential equation (PDE). A path-integral formula is then proposed for solving the PDE. One of the distinguishing features of the path-integral formula is that it does not involve parameterization via a choice of basis functions, and leads to the computation of eigenfunction value at any given point in the state space (figure \ref{fig:PIF}). The results in \cite{deka2023path} for eigenfunction computation were, however, only restricted to a system with a stable (or anti-stable) equilibrium point. \textcolor{black}{The path-integral formula for a stable equilibrium point is asymptotic in nature and involves computing the integral over infinite time. For a system with a saddle-type equilibrium point, the path integral computed over infinite time is not well defined.} 

In this paper, we extend the results from \cite{deka2023path} for systems with a saddle-type equilibrium point. \textcolor{black}{Computing the principal eigenfunctions of such systems has several applications. In \cite{umathe2023spectral}, the principal eigenfunctions of the Koopman operator associated with the saddle-type equilibrium point were used to characterize the stability boundary. In particular, the joint zero-level curve of eigenfunction associated with positive eigenvalue of the saddle equilibrium point forms the stable manifold and the stability boundary. As another example, Lagrangian submanifolds, which are fundamental objects used in the construction of solution to the Hamilton Jacobi equation, can be obtained using the principal eigenfunctions of the Koopman operator associated with corresponding Hamiltonian dynamical system \cite{vaidya2022spectral} (note that these Hamiltonian systems have a saddle point equilibrium).}

The main contribution of this paper is to provide conditions under which a finite-time path-integral helps approximate the Koopman principal eigenfunctions for a system with a saddle equilibrium point. We provide two methods for approximating Koopman eigenfunctions for a system with saddle-type equilibrium point. Our first method relies on prior knowledge of Koopman eigenfunctions in a certain subset of the state space. We propose using local Extended Dynamic Mode Decomposition (EDMD) for this. It is important to emphasize that the EDMD algorithm \cite{williams2015data} is only used locally in a small region of the state space to locally approximate the eigenfunctions. The computation of eigenfunction values globally in the rest of the state space still relies on the path-integral formula. In our second method, we propose a transformation of the original vector field far from the equilibrium point. The transformed vector field does not affect the value of the eigenfunction in the region containing the equilibrium point. Both these methods enable finite-time computation of the path-integral formula. Finally, we present simulation results to validate these two methods for eigenfunction computation. Our simulation results show that both these methods can approximate the eigenfunction accurately. We also demonstrate the application of these methods for the computation of optimal control following results developed in \cite{vaidya2022spectral},\cite{guo2022tutorial}.

\section{Preliminaries}
\noindent\textbf{Notations:} ${\rm Re}(\cdot)$ denotes the real part of its argument, which can be \textcolor{black}{a constant or a function}. The space of continuously differentiable functions on a set $\cX$ is denoted by $C^1(\cX)$. $\mathbb{R}$ denotes set of reals and $\mathbb{C}$ is the complex set. The notation $\|\cdot\|$ is used for 2-norm of a complex vector.\\

We consider autonomous systems in this paper, given by the dynamics
\begin{align}\label{eq:dynamics}
    \dot \bx=\bff(\bx),\;\;\bx\in \cX\subseteq \mR^n,
\end{align}
where $\bff : \cX \rightarrow \mR^n$ is a smooth function and $\cX$ is \textcolor{black}{compact} set containing the origin. We assume that the origin is a hyperbolic equilibrium point. Let $s_t:\cX \rightarrow \cX$ denote the flow-map of this system, that is,
$
    s_t(\bx) = \bx + \int_0^t \bff(\bx(s))ds,
$
for all $t\ge 0$ and $\bx\in \cX$.

\begin{definition}\big[\textup{Koopman Operator}\big]
Given the dynamical system \eqref{eq:dynamics}, the Koopman semigroup of operators is defined as the linear operators $\mathbb{U}_t$ acting on a (Banach) space $\mathcal{F}$ of functions $\psi: \cX \rightarrow \mathbb{C}$ such that
\begin{equation}\label{koopman_operator}
   \left[\mathbb{U}_t \psi\right](\bx) = \psi(s_t(\bx)),
\end{equation}
for every $\psi \in \mathcal{F}$.
\end{definition} 

We shall take the space $\mathcal{F}$ to be the Banach space $\cC^1(\cX)$ in this paper.
The Koopman semigroup $\mathbb{U}_t$ are linear operators, and one can correspondingly define their spectrum as follows.
\begin{definition}\big[\textup{Koopman eigenfunctions and eigenvalues}\big] A function $\phi_\lambda(\bx)\in \cC^1(\cX)$  is said to be an eigenfunction of the Koopman operator associated with eigenvalue $\lambda$ if
\begin{eqnarray}
[\mU_t \phi_\lambda](\bx)=e^{\lambda t}\phi_\lambda(\bx)\label{eig_koopman}
\end{eqnarray}
or equivalently,
\begin{align}
    \frac{\partial \phi_\lambda}{\partial \bx}{\bff}(\bx)=\lambda \phi_\lambda(\bx)\label{eig_koopmang},
\end{align}
\textcolor{black}{which is the PDE associated with the infinitesimal Koopman generator.}
\end{definition}

\section{Path-integral formula}\label{sec:PIF}
The spectrum of the Koopman operator, in general, is quite intricate and could consist of discrete and continuous parts. Furthermore, the spectrum also depends on the underlying functional space used in the approximation \cite{mezic2020spectrum}.
In this paper, we are interested in approximating the eigenfunctions of the Koopman operator with associated eigenvalues that are the same as that of the linearization of the system dynamics at the equilibrium point. With the hyperbolicity assumption on the equilibrium point of the system \eqref{eq:dynamics}, this part of the Koopman spectrum is known to be discrete and well-defined \cite{mezic2020spectrum}. 

Following this hyperbolicity assumption, one can write the system dynamics \eqref{eq:dynamics} as 
\begin{align}
\dot \bx=\bff(\bx)=\bA \bx+\bff_n(\bx),\label{eq:dynamics}
\end{align}
where $\bA :=\frac{\partial \bff}{\partial \bx}(0)$ is the Jacobian matrix at the origin  and $\bff_n(\bx):=\bff(\bx)-\bA \bx$ is the purely nonlinear part of the vector field $\bff(\bx)$. Let $\lambda$ be an eigenvalue of the linearization, i.e., $\bA$, and let $\phi_\lambda(\bx)$ be the eigenfunction associated with the eigenvalue $\lambda$ (such eigenfunctions are called principal eigenfunctions).
Similar to the system decomposition into linear and nonlinear parts, the principal eigenfunction, $\phi_\lambda(\bx)$, also admits a decomposition into linear and nonlinear terms as follows:
\begin{align}
\phi_\lambda(\bx)=\bw_\lambda^\top \bx+h_\lambda(\bx), \label{princ_eig}
\end{align}
where $\bw_\lambda^\top\bx$ is the linear part and $h_\lambda(\bx)$ is the purely nonlinear term and hence satisfies $\frac{\partial h}{\partial \bx}(0)=0$.
Substituting (\ref{princ_eig}) in  equation (\ref{eig_koopmang})  and comparing terms, we obtain
\begin{align}\bw_\lambda^\top \bA=\lambda \bw_\lambda^\top,\label{linear_eig}
\end{align}
i.e., $\bw_\lambda$ is the left eigenvector of $\bA$ with eigenvalue $\lambda$. Similarly, the nonlinear part, $h_\lambda(\bx)$, of the eigenfunction satisfies the following linear PDE 
\begin{align}\frac{\partial h_\lambda}{\partial \bx}\bff(\bx)-\lambda h_\lambda(\bx)+\bw_\lambda^\top \bff_n(\bx)=0.\label{pde}
\end{align}
We now present the path-integral approach for the computation of the nonlinear part of the principal eigenfunctions in the following theorem.
\begin{theorem}\big[\textup{Path-integral formula \cite{deka2023path}}\big] The solution for the first order linear PDE \eqref{pde} can be written as 
\begin{align}
h_\lambda(\bx)=e^{-\lambda t} h_\lambda(\bs_t(\bx))+\int_0^t e^{-\lambda \tau} \bw^\top_\lambda \bff_n(\bs_\tau(\bx))d\tau,\label{pde_soultion}
\end{align}
where $\bs_t(\bx)$ is the flow of the system (\ref{eq:dynamics}). 
\end{theorem}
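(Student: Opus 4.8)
The plan is to reduce the first-order linear PDE \eqref{pde} to a scalar linear ODE along the trajectories of the flow, and then solve that ODE by an integrating factor. The key observation is that the integral curves $\tau \mapsto \bs_\tau(\bx)$ of the vector field $\bff$ are exactly the characteristics of \eqref{pde}, so that the spatial operator $\frac{\partial}{\partial \bx}\bff(\bx)$ appearing in the PDE turns into an ordinary time-derivative when restricted to such a curve.

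First I would fix $\bx$ and set $g(\tau) := h_\lambda(\bs_\tau(\bx))$. Differentiating by the chain rule and using that the flow satisfies $\frac{d}{d\tau}\bs_\tau(\bx) = \bff(\bs_\tau(\bx))$, I get $\frac{d g}{d\tau} = \frac{\partial h_\lambda}{\partial \bx}(\bs_\tau(\bx))\,\bff(\bs_\tau(\bx))$. Evaluating the PDE \eqref{pde} at the point $\bs_\tau(\bx)$ lets me replace the right-hand side, producing the linear scalar ODE $\frac{d g}{d\tau} = \lambda\, g(\tau) - \bw_\lambda^\top \bff_n(\bs_\tau(\bx))$.

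Next I would multiply through by the integrating factor $e^{-\lambda \tau}$ so that the left-hand side collapses to a total derivative, namely $\frac{d}{d\tau}\big(e^{-\lambda\tau} g(\tau)\big) = -\,e^{-\lambda\tau}\bw_\lambda^\top \bff_n(\bs_\tau(\bx))$. Integrating this from $0$ to $t$ and invoking the initial condition $\bs_0(\bx) = \bx$, hence $g(0) = h_\lambda(\bx)$, I obtain $e^{-\lambda t} h_\lambda(\bs_t(\bx)) - h_\lambda(\bx) = -\int_0^t e^{-\lambda\tau}\bw_\lambda^\top \bff_n(\bs_\tau(\bx))\, d\tau$, and rearranging gives exactly the claimed formula \eqref{pde_soultion}.

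I do not expect any genuinely hard step here: the computation is textbook variation of parameters, and since each manipulation is reversible the same argument shows that \eqref{pde_soultion} in fact characterizes the PDE solution. The only substantive insight is the initial recognition that \eqref{pde} becomes an ODE along the flow. The mild technical points worth checking are that $h_\lambda \in \cC^1$ and $\bs_\tau$ are smooth enough for the chain rule to apply, and that the trajectory stays in $\cX$ over $[0,t]$ so that the integrand is well-defined — both guaranteed by the standing $C^1$ and compactness hypotheses.
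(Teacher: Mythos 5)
Your proof is correct: the reduction of \eqref{pde} to the ODE $\frac{dg}{d\tau}=\lambda g(\tau)-\bw_\lambda^\top\bff_n(\bs_\tau(\bx))$ along characteristics, followed by the integrating factor $e^{-\lambda\tau}$, yields exactly \eqref{pde_soultion}. The paper itself states this theorem without proof, importing it from \cite{deka2023path}, where the derivation proceeds by this same standard method-of-characteristics argument, so your approach matches the intended one.
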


The solution $h_\lambda(\bx)$ obtained through equation \eqref{pde_soultion} is an implicit expression, which needs the value of $h_\lambda$ at a point $\bs_t(\bx)$. However, in certain cases, one can resolve this issue by considering the limit $t \rightarrow \infty$. This is presented as follows.

\begin{theorem}\big[\textup{Convergence rate of nonlinear term \cite{deka2023path}}\big]\label{theorem_main2}
For the dynamical system (\ref{eq:dynamics}), let the origin be an asymptotically stable equilibrium point with the domain of attraction $\cD$ and let $\bA$ be Hurwitz. Furthermore, suppose all the eigenvalues of the $\bA$ satisfy 
\begin{align}
-{\rm Re}(\lambda)+2{\rm Re}(\lambda_{max})<0,\label{eigendistribution}
\end{align}
where $\lambda_{max}$ is the eigenvalue closest to the imaginary axis and in the left half plane. Let $h_\lambda$ be the solution of PDE (\ref{pde}) as given in (\ref{pde_soultion}). 
Then, 
\begin{align}
\lim_{t\to \infty}e^{-\lambda t}h_\lambda(\bs_t(\bx))=0,\;\;\;\forall \bx\in \cD\label{condition_convergence}
\end{align}
if 
$h_\lambda(\bx)$ is purely nonlinear function of $\bx$ i.e., $\frac{\partial h_\lambda}{\partial \bx}(0)=0$. 
\end{theorem}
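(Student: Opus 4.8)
The plan is to play two exponential rates against each other: the factor $e^{-\lambda t}$ grows in modulus like $e^{-{\rm Re}(\lambda)t}$, while the purely nonlinear structure of $h_\lambda$ forces $h_\lambda(\bs_t(\bx))$ to decay \emph{quadratically} in $\|\bs_t(\bx)\|$, and $\|\bs_t(\bx)\|$ itself decays at the rate set by the slowest stable mode ${\rm Re}(\lambda_{max})$. Combining these yields an overall rate $-{\rm Re}(\lambda)+2{\rm Re}(\lambda_{max})$, which is strictly negative precisely by hypothesis \eqref{eigendistribution}; the factor of $2$ there is exactly the second-order vanishing of $h_\lambda$ at the origin.

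First I would quantify the decay of the flow. Since $\bA$ is Hurwitz and the origin is asymptotically stable with domain of attraction $\cD$, every trajectory starting in $\cD$ satisfies $\bs_t(\bx)\to 0$ and eventually enters a neighborhood $U$ of the origin on which the linearization governs the decay rate. Using the standard spectral-abscissa estimate $\|e^{\bA t}\|\le M_\epsilon\, e^{({\rm Re}(\lambda_{max})+\epsilon)t}$ (valid for every $\epsilon>0$, with the $\epsilon$ absorbing any polynomial factors coming from Jordan blocks), and because near the origin the nonlinear flow inherits this rate, there exist constants $M>0$ and $T\ge 0$ such that
\begin{align}
\|\bs_t(\bx)\| \le M\, e^{({\rm Re}(\lambda_{max})+\epsilon)\,t}, \qquad t\ge T. \label{eq:flowdecay}
\end{align}

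Next I would exploit the purely nonlinear structure of $h_\lambda$. Because $h_\lambda(0)=0$ (principal eigenfunctions vanish at the equilibrium, since $\lambda\neq 0$) and $\frac{\partial h_\lambda}{\partial \bx}(0)=0$, and since $h_\lambda$ is smooth (its regularity being inherited from that of $\bff$), a second-order Taylor expansion about the origin produces a constant $K>0$ with $|h_\lambda(\by)|\le K\|\by\|^2$ for all $\by\in U$. Substituting \eqref{eq:flowdecay} gives, for $t\ge T$,
\begin{align}
\big|e^{-\lambda t} h_\lambda(\bs_t(\bx))\big| \le K M^2\, e^{\left(-{\rm Re}(\lambda)+2{\rm Re}(\lambda_{max})+2\epsilon\right)t}. \nonumber
\end{align}
By \eqref{eigendistribution} the bracketed exponent is strictly negative for all sufficiently small $\epsilon>0$, so the right-hand side tends to $0$ as $t\to\infty$, which is \eqref{condition_convergence}.

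The main obstacle I anticipate is rigorously securing the quadratic bound $|h_\lambda(\by)|\le K\|\by\|^2$: this needs $h_\lambda$ to vanish to second order at the origin, which requires control beyond the ambient $C^1(\cX)$ regularity (e.g. a $C^2$ or Lipschitz-gradient estimate near the origin). I would address this by invoking the smoothness of the principal eigenfunctions that follows from the smoothness of $\bff$ together with hyperbolicity (a Sternberg-type linearization argument), because the weaker $C^1$ bound $|h_\lambda(\by)|=o(\|\by\|)$ only delivers the rate $-{\rm Re}(\lambda)+{\rm Re}(\lambda_{max})$, which is non-negative when $\lambda$ is the slowest mode and hence insufficient. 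A secondary point to make precise is the uniformity of the flow estimate \eqref{eq:flowdecay} over all of $\cD$, handled by first flowing into $U$ in finite time $T$ and then applying the local rate.
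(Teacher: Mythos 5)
Your proof is correct and follows essentially the same route as the source: this paper states Theorem \ref{theorem_main2} without proof, importing it from \cite{deka2023path}, whose argument is precisely your two-rate comparison --- a quadratic bound $|h_\lambda(\by)|\le K\|\by\|^2$ near the origin from the second-order vanishing of $h_\lambda$, combined with the local flow estimate $\|\bs_t(\bx)\|\le M e^{({\rm Re}(\lambda_{max})+\epsilon)t}$ after a finite entry time, yielding the exponent $-{\rm Re}(\lambda)+2{\rm Re}(\lambda_{max})+2\epsilon<0$. The regularity caveat you flag is genuine but self-resolving under the stated hypotheses: since \emph{all} eigenvalues satisfy \eqref{eigendistribution}, any resonance $\lambda=\sum_i m_i\lambda_i$ with $\sum_i m_i\ge 2$ is impossible (its real part is at most $2\,{\rm Re}(\lambda_{max})<{\rm Re}(\lambda)$), so the non-resonance needed for your Sternberg-type $C^2$ argument holds automatically.
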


This theorem provides sufficient conditions on the distribution of the principal eigenvalues of a stable system, such that the nonlinear part of a principal eigenfunction $\phi_\lambda$ decays at a rate faster than $\lambda$, leading to the first term on the right-hand side of \eqref{pde_soultion} to converge to zero asymptotically. The result is trivially extendable to anti-stable systems by reversing the flow of the system. This leads to an easily computable expression for the principal eigenfunction, as summarized in the following theorem.

\begin{theorem}\big[\textup{Eigenfunctions of stable systems\cite{deka2023path}}\big]
\label{theorem_stable}
Consider the dynamical system (\ref{eq:dynamics}) with origin asymptotically stable and with the domain of attraction $\cD$. Let the eigenvalue $\lambda$ of matrix $\bA$ satisfy condition (\ref{eigendistribution}). Then the principal eigenfunction, $\phi_\lambda$, corresponding to eigenvalue $\lambda$, is well defined in the domain $\cD$ and is given by following path-integral formula:
\begin{align}
\phi_\lambda(\bx)=\bw_\lambda^\top \bx+\int_0^\infty e^{-\lambda t}\bw_\lambda^\top \bff_n(\bs_t(\bx))dt\label{pathintegralformula}
\end{align}
where $\bw_\lambda$ \textcolor{black}{satisfies} $\bw_\lambda^\top \bA=\lambda \bw_\lambda^\top$. 
\end{theorem}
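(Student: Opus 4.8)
The plan is to read off formula \eqref{pathintegralformula} as a direct consequence of the two preceding theorems combined with the decomposition \eqref{princ_eig}. The starting point is that, by hyperbolicity of the origin, the principal eigenfunction $\phi_\lambda$ associated with the eigenvalue $\lambda$ of $\bA$ exists and is $C^1$ on $\cD$, and splits as $\phi_\lambda(\bx)=\bw_\lambda^\top\bx+h_\lambda(\bx)$, where $\bw_\lambda$ is the left eigenvector fixed by \eqref{linear_eig} and $h_\lambda$ is purely nonlinear with $\tfrac{\partial h_\lambda}{\partial\bx}(0)=0$. The linear part is already pinned down, so the entire task reduces to producing a closed-form expression for the nonlinear remainder $h_\lambda$.

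First I would invoke the path-integral identity \eqref{pde_soultion}, which holds for every finite horizon $t$ and writes $h_\lambda(\bx)$ as a boundary term $e^{-\lambda t}h_\lambda(\bs_t(\bx))$ plus the partial integral $\int_0^t e^{-\lambda\tau}\bw_\lambda^\top\bff_n(\bs_\tau(\bx))\,d\tau$. Next I would let $t\to\infty$. Since $\lambda$ satisfies the spectral-gap condition \eqref{eigendistribution} and $h_\lambda$ is purely nonlinear, Theorem~\ref{theorem_main2} applies verbatim and forces the boundary term to vanish for every $\bx\in\cD$. Passing to the limit in \eqref{pde_soultion} then yields $h_\lambda(\bx)=\int_0^\infty e^{-\lambda t}\bw_\lambda^\top\bff_n(\bs_t(\bx))\,dt$, and adding back the linear part $\bw_\lambda^\top\bx$ recovers \eqref{pathintegralformula}.

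The one point requiring genuine care is the convergence of the improper integral, i.e.\ that the partial integrals have a finite limit. Here I would estimate the integrand directly: on $\cD$ the flow decays at the linearized rate governed by $\lambda_{max}$, so $\|\bs_t(\bx)\|$ is bounded, up to polynomial-in-$t$ factors, by $e^{{\rm Re}(\lambda_{max})t}$, while $\bff_n$ vanishes to second order at the origin, giving $\|\bff_n(\bs_t(\bx))\|\lesssim\|\bs_t(\bx)\|^2\lesssim e^{2{\rm Re}(\lambda_{max})t}$. Multiplying by $|e^{-\lambda t}|=e^{-{\rm Re}(\lambda)t}$ produces an integrand bounded by $e^{(-{\rm Re}(\lambda)+2{\rm Re}(\lambda_{max}))t}$, whose exponent is strictly negative precisely by \eqref{eigendistribution}; hence the integral converges absolutely and \eqref{pathintegralformula} is well defined on $\cD$. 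More economically, convergence need not be argued from scratch: once the boundary term is known to vanish, the right-hand side of \eqref{pde_soultion} must converge to the $C^1$ value $h_\lambda(\bx)$, so the partial integrals converge by construction. I therefore expect the main obstacle to be purely bookkeeping over the polynomial corrections that arise when $\bA$ is non-diagonalizable or has repeated eigenvalues, since those corrections must be absorbed into the strict inequality in \eqref{eigendistribution}; the conceptual content is supplied entirely by the path-integral identity \eqref{pde_soultion} and Theorem~\ref{theorem_main2}.
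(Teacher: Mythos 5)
Your proposal is correct and matches the paper's intended derivation exactly: the paper obtains \eqref{pathintegralformula} by combining the finite-time identity \eqref{pde_soultion} with Theorem~\ref{theorem_main2} (which kills the boundary term under condition \eqref{eigendistribution}) and adding back the linear part from \eqref{princ_eig}. Your supplementary absolute-convergence estimate, bounding the integrand by $e^{(-{\rm Re}(\lambda)+2{\rm Re}(\lambda_{max}))t}$ via the quadratic vanishing of $\bff_n$ at the origin, is the same argument used in the cited prior work \cite{deka2023path} to justify well-definedness on $\cD$.
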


\textcolor{black}{
\begin{remark}
    The path-integral formula \eqref{pathintegralformula} bears some resemblance to the Laplace average-based computation technique proposed in \cite{mauroy2013isostables}\cite{mohr2014construction}. Indeed, computation of Laplace averages also involves evaluating an observable along system trajectories, exponentially weighted over an infinite time horizon. However, finding observables with converging Laplace averages is non-trivial, and as noted in literature, these computations can become numerically problematic. By contrast, the integral in equation \eqref{pathintegralformula} is evaluated along an observable which is known a priori, allowing us to obtain rigorous convergence guarantees \cite{deka2023path}. Interestingly, similar conditions as in \cite{deka2023path} have been shown to be sufficient for the convergence of Laplace averages of certain observables \cite{deka2023supervised}.
\end{remark}}

 
 Computation of the Koopman eigenfunction using theorem \ref{theorem_stable} is restricted to systems with stable and anti-stable equilibrium points. For systems with a saddle-type equilibrium point, the path-integral formula in (\ref{pathintegralformula}) will not work, as we shall discuss in section \ref{sec:path_int_saddle}. Given the usefulness of the Koopman eigenfunction for saddle point systems, as described in section \ref{sec:Intro}, it is of interest to develop an approach for its approximation. Toward this goal, we propose an approach based on the finite-time evaluation of the path-integral formula for the computation of the eigenfunctions in the next section. 

\section{Path integrals for saddle point systems}\label{sec:path_int_saddle}
We note that, although the path-integral \eqref{pde_soultion} provides a solution to nonlinear part $h_\lambda(\bx)$ of the principal eigenfunction $\phi_\lambda(\bx)$, the value of $h_\lambda$ at any point $\bs_t(\bx)$ is usually unknown. While theorem \ref{theorem_main2} resolves this issue for stable (and anti-stable systems) by taking the limit $t\rightarrow \infty$, but for saddle point systems $e^{-\lambda t}h_\lambda(\bs_t(\bx))$ can easily diverge since both the exponential term (with ${\rm Re}(\lambda)<0$) and $\bs_t(\bx)$ diverge as $t\rightarrow \infty$. Thus, we require a known terminal boundary condition on $h_\lambda$ in order to utilize path-integral formula \eqref{pde_soultion} over a finite time horizon.


We now present two ways to estimate the value of $h_\lambda$ over a boundary set $\mathcal{S}$. The first method is a data-driven approach that relies on trajectory data sampled locally near the set $\mathcal{S}$. The other approach is model-based and involves transforming the analytical vector fields of the system into a new system such that the nonlinear part of the Koopman eigenfunctions, $h_\lambda$, for this new system are approximately zero on the set $\mathcal{S}$.

\subsection*{\textbf{Method A: Local-EDMD for approximating terminal boundary condition}}\label{sub:method1}

EDMD is a well-known algorithm for estimating Koopman eigenfunctions and/or finite-dimensional approximation of the Koopman operator from trajectory data \cite{williams2015data}. Typically for systems with an attractor set $\mathcal{A}$, EDMD is employed on trajectory data sampled from the domain of attraction $\cD$ of set $\mathcal{A}.$ The resulting approximation of the Koopman operator and its spectrum is expected to hold on this entire set of $\cD.$

Consider a set of basis functions $\theta_i(\bx) \in \mathcal{F}, \; i=1,2,..,N$ with $\theta(\bx)\doteq\left[\theta_1(\bx),\theta_2(\bx),\ldots,\theta_N(\bx)\right]^\top$.
If we have $M$ pairs of trajectory snapshots at some uniform sampling time $\tau$, in form of  $(\bx_i,\by_i)\in \cD \times \cD$ with $\by_i = \bs_\tau(\bx_i)$ for $i=1,2,...,M$, we recall that the EDMD algorithm yields a finite-dimensional approximation of the Koopman Operator $\mathcal{K}_\tau$ restricted to the subspace spanned by $\theta_i(\bx) \in \mathcal{F}, \; i=1,2,..,N$. This is expressed in terms of the Koopman matrix $K$, by solving the following:
\begin{equation}\label{eq:EDMD}
K = \underset{A \in \mathbb{C}^{N \times N}}{\argmin} \|\theta(Y) - A\theta(X)\|^2_F 
\end{equation}
where $\|\cdot\|_F$ denotes the Frobenius norm, and $\theta(\bx) = [\theta(\bx_1), \theta(\bx_2), \cdots, \theta(\bx_M)]$ and $\theta(Y) = [\theta(\by_1),\theta(\by_2),\cdots,\theta(\by_M)]$ are $N \times M$ matrices. The closed-form expression for $K$ that minimizes the least-squares problem \eqref{eq:EDMD} is $K = \theta_{XY}\theta_{XX}^{\dagger}$, where $\dagger$ denotes the pseudo-inverse, $\theta_{XY} = \theta(Y)\theta(X)^\top$ and $\theta_{XX} = \theta(X)\theta(X)^\top$. Koopman eigenfunctions can then be extracted from the $K$ matrix using its left eigenvectors. That is, if $\nu \in \mathbb{R}^N$ is a left eigenvector of $K$, corresponding to some eigenvalue $\mu$, then clearly $\nu^\top \theta(\bx)$ approximates the Koopman eigenfunction, since $\nu^\top \theta(\bs_\tau(\bx)) \approx \nu^\top K \theta(\bx) = \mu \nu^\top \theta(\bx)$ for all $\bx \in \cD.$

Note that one can restrict the trajectory samples $(\bx_i,\by_i), \, i=1,\ldots,N,$ to some small neighborhood of a set $\mathcal{S},$ leading to local approximations of Koopman eigenfunctions confined to the small region from where the EDMD samples are drawn. Given a fixed set of basis functions $\theta_i,\,i=1,\ldots,N$, one can argue that EDMD estimates of the eigenfunctions are more accurate when confined to a smaller validity domain. For example, given a two dimensional dynamical system that has a stable origin and a stable Koopman eigenfunction 
\begin{equation}\label{eq:edmd_ex}
    \psi_\lambda(\bx) = \bx_1^4 + 2\bx_1^2 + \bx_1\bx_2 - \bx_2 + \bx_2^4,
\end{equation}
let us say we have EDMD basis functions comprising of monomials of degree up to 2. Clearly, with these basis functions, the estimate of the eigenfunction \eqref{eq:edmd_ex} will be inaccurate away from the origin. However, when restricted to the manifold $\mathcal{S}= \left\{\bx \;\lvert\; \bx_1^4 + \bx_2^4 = 1 \right\}$, we can exactly estimate the eigenfunction \eqref{eq:edmd_ex} with our given monomial basis functions since
\[
    \psi_\lambda(\bx) = 2\bx_1^2 + \bx_1\bx_2 - \bx_2 + 1\; \forall \bx \in \mathcal{S},
\]
with a nonlinear part $h_\lambda(\bx)=2\bx_1^2 + \bx_1\bx_2 + 1$ on the manifold $\mathcal{S}$. Once we have the estimate of $h_\lambda(\bx)$ on set $\mathcal{S}$, we can use the path-integral \eqref{pde_soultion} to find $h_\lambda(\bx)$ on an larger, extended domain (more precisely, on the set $\left\{\bs_t(\bx) \; \vert \; \bx \in \mathcal{S}, \, t\in \mathbb{R}\right\}$). Note that we require non-recurrence assumption on the set $\mathcal{S}$, as explained in \cite{korda2020optimal}.

Thus, the following theorem provides an approach for computing the approximation of the Koopman eigenfunction over some set $\bar \cX$ based on prior known estimates of the eigenfunction in some other known region $\cS$. 
\begin{theorem} Let $\cS\subset \cX$ and $\bar \cX\subseteq \cX$ be such that for all $\bx\in \bar\cX$ there exists a $\bar t(\bx)\in \mR$ for which $\bs_{\bar t(\bx)}(\bx)\in \cS$. Furthermore, \textcolor{black}{consider a function $\hat h_\lambda(\bx)$ such that}
\[\|h_\lambda(\bx)-\hat h_\lambda(\bx)\|\leq \epsilon_\cS\]
for all $\bx\in \cS$. Then for any point $\bx\in \bar \cX$, we have

\begin{align}\|\tilde h_\lambda(\bx)-h_\lambda(\bx)\|\leq \epsilon_\cS e^{-{\rm Re}(\lambda) \bar t(\bx)},\label{results}
\end{align}
where 
\small
\begin{align}
\tilde h_\lambda(\bx)=e^{-\lambda \bar t(\bx)}\hat h_\lambda(\bs_{\bar t (\bx)}(\bx))+\int_0^{\bar t(\bx)} e^{-\lambda \tau}\bw_\lambda^\top \bff_n(\bs_\tau(\bx))d\tau\label{modformula}
\end{align}
\end{theorem}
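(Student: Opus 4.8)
The plan is to use the exact path-integral identity \eqref{pde_soultion} as the backbone, and then estimate the single term where the unknown true value $h_\lambda$ has been replaced by its approximation $\hat h_\lambda$. The key observation is that the formula \eqref{pde_soultion} holds for \emph{any} time $t$, in particular for the specific (state-dependent) time $t = \bar t(\bx)$ that transports $\bx$ onto the set $\cS$. The hypothesis guarantees that such a $\bar t(\bx)$ exists for every $\bx \in \bar\cX$, so substituting $t = \bar t(\bx)$ into \eqref{pde_soultion} gives an exact expression for $h_\lambda(\bx)$ that agrees with $\tilde h_\lambda(\bx)$ in \eqref{modformula} in every term except that the integral and the exponential prefactor multiply $h_\lambda(\bs_{\bar t(\bx)}(\bx))$ rather than $\hat h_\lambda(\bs_{\bar t(\bx)}(\bx))$.

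First I would write down the exact relation obtained by setting $t = \bar t(\bx)$ in \eqref{pde_soultion}:
\[
h_\lambda(\bx) = e^{-\lambda \bar t(\bx)} h_\lambda(\bs_{\bar t(\bx)}(\bx)) + \int_0^{\bar t(\bx)} e^{-\lambda \tau} \bw_\lambda^\top \bff_n(\bs_\tau(\bx))\, d\tau.
\]
Subtracting this from the definition \eqref{modformula} of $\tilde h_\lambda(\bx)$, the integral terms cancel identically and I am left with
\[
\tilde h_\lambda(\bx) - h_\lambda(\bx) = e^{-\lambda \bar t(\bx)}\bigl(\hat h_\lambda(\bs_{\bar t(\bx)}(\bx)) - h_\lambda(\bs_{\bar t(\bx)}(\bx))\bigr).
\]
The second step is to take norms and bound the magnitude of the scalar exponential factor: since $|e^{-\lambda \bar t(\bx)}| = e^{-{\rm Re}(\lambda)\bar t(\bx)}$, I get
\[
\|\tilde h_\lambda(\bx) - h_\lambda(\bx)\| \le e^{-{\rm Re}(\lambda)\bar t(\bx)}\,\bigl\|\hat h_\lambda(\bs_{\bar t(\bx)}(\bx)) - h_\lambda(\bs_{\bar t(\bx)}(\bx))\bigr\|.
\]

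Finally, the point $\bs_{\bar t(\bx)}(\bx)$ lies in $\cS$ by the defining property of $\bar t(\bx)$, so the approximation hypothesis $\|h_\lambda - \hat h_\lambda\| \le \epsilon_\cS$ on $\cS$ applies to exactly this point, yielding $\|\hat h_\lambda(\bs_{\bar t(\bx)}(\bx)) - h_\lambda(\bs_{\bar t(\bx)}(\bx))\| \le \epsilon_\cS$ and hence the claimed bound \eqref{results}. I do not anticipate a genuine obstacle here; the proof is essentially a one-line cancellation followed by the triangle-inequality/modulus estimate. The only point requiring a little care is conceptual rather than technical: $\bar t(\bx)$ is allowed to be negative (the statement says $\bar t(\bx) \in \mR$), corresponding to flowing backward in time to reach $\cS$, so one should confirm that identity \eqref{pde_soultion} remains valid for negative $t$ as well — which it does, since it is simply the integrated form of the linear ODE that $t \mapsto e^{-\lambda t} h_\lambda(\bs_t(\bx))$ satisfies along trajectories, valid for all $t$ in the interval on which the flow is defined.
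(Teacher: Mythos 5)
Your proof is correct and follows exactly the paper's argument: instantiate the path-integral identity \eqref{pde_soultion} at $t=\bar t(\bx)$, subtract it from \eqref{modformula} so the integrals cancel, and bound the remaining term using $|e^{-\lambda \bar t(\bx)}| = e^{-{\rm Re}(\lambda)\bar t(\bx)}$ together with the $\epsilon_\cS$ hypothesis at the point $\bs_{\bar t(\bx)}(\bx)\in\cS$. You in fact spell out the cancellation and the validity of \eqref{pde_soultion} for negative $\bar t(\bx)$ more explicitly than the paper does, which leaves these steps implicit.
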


\begin{proof}
We know that for any point $\bx\in \bar \cX$, there exists a time $\bar t(\bx)$ for which $\bs_{\bar t(\bx)}(\bx)\in \cS$, hence using the PDE solution formula (\ref{pde_soultion}), we have 
\small\begin{align}
h_\lambda(\bx)=e^{-\lambda \bar t(\bx)} h_\lambda(\bs_{\bar t(\bx)}(\bx))+\int_0^{\bar t(\bx)} e^{-\lambda \tau} \bw^\top_\lambda \bff_n(\bs_\tau(\bx))d\tau,\label{estimateS}
\end{align}
The result (\ref{results}) then follows from (\ref{modformula}) and (\ref{estimateS}).
\end{proof}

\begin{remark}
    For unstable eigenfunctions where ${\rm Re}(\lambda)>0$, the approximation error bound \eqref{results} can be seen to exponentially diminish for points farther away from the set $\mathcal{S}$. Moreover, once $h_\lambda$ is estimated on $\mathcal{S}$ using a given choice of basis functions, the path-integral formula \eqref{pde_soultion} allows us to estimate $h_\lambda$ on $\bar \cX$ in a basis-free manner. A similar approach of combining local EDMD with Laplace averages evaluated along global trajectories was recently proposed in \cite{deka2023supervised}, albeit for systems with a stable equilibrium.
\end{remark}

\subsection*{\textbf{Method B: Enforcing zero terminal boundary condition using vector field transformation}}\label{sub:method2}
Let us consider the dynamical system \eqref{eq:dynamics} again. For our presentation in this subsection, we are mainly interested in the computation of unstable principal eigenfunctions for such systems (stable principal eigenfunctions can be similarly found by reversing the flow of the dynamics). 

One can smoothly transform the vector field $\bff(\bx)$ as follows
\begin{equation*}
    \Tilde{\bff}(\bx) \doteq \bff(\bx) + \sigma(\|\bx\| - r)\big(\bA\bx - \bff(\bx)\big),
\end{equation*}
where the scalar function $\sigma: \mathbb{R} \rightarrow (0,1)$ is a smooth approximation of the unit step function, leading to $\Tilde{\bff}(\bx) \approx \bff(\bx)$ for $\|\bx\|<<r$ and $\Tilde{\bff}(\bx) \approx \bA\bx$ for $\|\bx\|>>r$. For the sake of concreteness, in the rest of the paper, we shall take this scalar function $\sigma(z)$ to be $\frac{1}{2}(1+\tanh(az))$ where $a>0$ is some large constant. We shall further write $\sigma(\|\bx\|-r)$ simply as $\sigma_r$ for brevity. That is, we take
\begin{equation}\label{eq:mod}
    \Tilde{\bff}(\bx) = \bff(\bx) + \frac{1}{2}\big(1+\tanh(a(\|\bx\|-r))\big)\big(\bA\bx - \bff(\bx)\big).
\end{equation}
\textcolor{black}{\vspace{-0.4cm}
\begin{assumption}\label{assum:equilibruim}
    We assume that the construction \eqref{eq:mod} does not introduce any spurious equilibrium points or $\omega$-limit sets. That is, $\bx=0$ is the only equilibrium of $\Tilde{\bff}(\bx)$.
\end{assumption}}

\begin{remark}
    Given that the original system has a unique equilibrium at $\bx=0$, a sufficient condition for uniqueness of equilibrium at zero for the system $\dot{\bx}=\Tilde{\bff}(\bx)$ is that $\bff^\top \bA\bx$ must be non-negative, since $\bff^\top \bA\bx > 0 \implies \|\tilde{\bff}\|^2 > 0$ for $\bx \neq 0$. For ensuring that $\tilde{\bff}$ does not have other types of $\omega-$ limit sets, however, a more thorough analysis is needed.
\end{remark}

Note that, for the transformed system $\dot{\bx} = \Tilde{\bff}(\bx)$, the origin remains as an hyperbolic equilibrium. Furthermore, both $\Tilde{\bff}$ and $\bff$ have the same principal Koopman eigenvalues, since $\nabla_\bx\Tilde{\bff}(0) = \nabla_\bx \bff(0) = \bA$. Thus, we can state the following proposition.
\begin{proposition}\label{prop:main1}
Given a system \eqref{eq:dynamics} and the transformation \eqref{eq:mod}, any $\mathcal{C}^1$ principal eigenfunction $\phi_\lambda$ of system \eqref{eq:dynamics} is also approximately a principal eigenfunction for the system $\dot{\bx}=\tilde{\bff}(\bx)$ in the region $\|\bx\|<r$. That is, for any positive constants $\epsilon_1$ and $\epsilon_2<r$, we can choose the constant $a>0$ such that
\begin{equation}\label{eq:approx_bound}
    \| \tilde{\bff}^\top \nabla_\bx \phi_\lambda - \lambda \phi_\lambda \| \le \epsilon_1
\end{equation}
whenever $\|\bx\|\le r - \epsilon_2.$
\end{proposition}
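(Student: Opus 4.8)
The plan is to establish the bound \eqref{eq:approx_bound} by directly substituting the transformed vector field \eqref{eq:mod} into the eigenfunction defect $\tilde{\bff}^\top \nabla_\bx \phi_\lambda - \lambda \phi_\lambda$ and exploiting the fact that $\phi_\lambda$ is an \emph{exact} eigenfunction for the \emph{original} system, i.e. $\bff^\top \nabla_\bx \phi_\lambda = \lambda \phi_\lambda$ (this is just equation \eqref{eig_koopmang} written with the gradient). First I would write $\tilde{\bff}(\bx) = \bff(\bx) + \sigma_r\big(\bA\bx - \bff(\bx)\big)$ and compute
\begin{align*}
\tilde{\bff}^\top \nabla_\bx \phi_\lambda - \lambda \phi_\lambda
&= \big(\bff^\top \nabla_\bx \phi_\lambda - \lambda \phi_\lambda\big) + \sigma_r\,\big(\bA\bx - \bff(\bx)\big)^\top \nabla_\bx \phi_\lambda \\
&= \sigma_r\,\big(\bA\bx - \bff(\bx)\big)^\top \nabla_\bx \phi_\lambda,
\end{align*}
since the first parenthesized term vanishes identically by \eqref{eig_koopmang}. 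Thus the entire defect is controlled by the single factor $\sigma_r = \tfrac{1}{2}\big(1 + \tanh(a(\|\bx\|-r))\big)$.

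The remaining step is to bound $\sigma_r$ uniformly on the region $\|\bx\| \le r - \epsilon_2$ and absorb the rest into a constant. On that region the argument of $\tanh$ satisfies $a(\|\bx\|-r) \le -a\epsilon_2$, so $\sigma_r \le \tfrac{1}{2}\big(1 + \tanh(-a\epsilon_2)\big) = \tfrac{1}{2}\big(1 - \tanh(a\epsilon_2)\big)$, which decays to $0$ as $a\to\infty$. For the other factor, I would use that $\cX$ is compact and $\phi_\lambda \in \mathcal{C}^1(\cX)$, together with smoothness of $\bff$, to bound
\[
\Big\| \big(\bA\bx - \bff(\bx)\big)^\top \nabla_\bx \phi_\lambda \Big\| \le C
\]
for some finite constant $C$ depending only on the $\mathcal{C}^1$ norms of $\bff$ and $\phi_\lambda$ over $\cX$ (indeed $\bA\bx - \bff(\bx) = -\bff_n(\bx)$ is itself purely nonlinear and continuous, hence bounded on the compact set). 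Then $\|\tilde{\bff}^\top \nabla_\bx \phi_\lambda - \lambda \phi_\lambda\| \le C\,\sigma_r \le \tfrac{C}{2}\big(1 - \tanh(a\epsilon_2)\big)$, and choosing $a$ large enough that $\tfrac{C}{2}\big(1 - \tanh(a\epsilon_2)\big) \le \epsilon_1$ completes the argument.

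The calculation itself is routine; the one point requiring care is the quantifier structure of the claim. The statement fixes $\epsilon_1$ and $\epsilon_2$ first and then asks for a suitable $a$, so I must make sure the bounding constant $C$ does not itself depend on $a$. This is the case because $C$ comes only from the $\mathcal{C}^1$ data of the fixed objects $\bff$ and $\phi_\lambda$ on the fixed compact set $\cX$, while the dependence on $a$ is isolated entirely in the scalar $\sigma_r$ factor. A secondary subtlety worth flagging is that $\phi_\lambda$ is assumed to be a $\mathcal{C}^1$ eigenfunction of the \emph{original} system on all of $\cX$; the proposition only asserts approximate eigenfunction behaviour for the \emph{transformed} system, so no claim about existence or regularity of eigenfunctions for $\tilde{\bff}$ is needed, which is what keeps the proof short. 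I expect the main obstacle to be purely expository, namely making precise that the error is a multiplicative $\sigma_r$ times an $a$-independent bounded quantity, rather than any genuine analytic difficulty.
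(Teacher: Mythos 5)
Your proof is correct and takes essentially the same route as the paper's: the same cancellation via $\bff^\top \nabla_\bx \phi_\lambda = \lambda \phi_\lambda$ reducing the defect to $\sigma_r\,\bff_n(\bx)^\top \nabla_\bx \phi_\lambda$, the same bound $\sigma_r \le \tfrac{1}{2}\bigl(1-\tanh(a\epsilon_2)\bigr)$ on $\|\bx\|\le r-\epsilon_2$, and the same compactness argument giving an $a$-independent constant ($\beta$ in the paper, your $C$). The paper merely makes the final choice explicit, taking $a > \tfrac{1}{\epsilon_2}\tanh^{-1}\bigl(1-\tfrac{2\epsilon_1}{\beta}\bigr)$, which your concluding step implies.
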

\begin{proof}
Since $\bff^\top \nabla_\bx \phi_\lambda = \lambda \phi_\lambda$ and $\|\bx\|\le r - \epsilon_2$, we have
\begin{align*}
     &\| \tilde{\bff}^\top \nabla_\bx \phi_\lambda - \lambda \phi_\lambda \| = \|(\bff - \Tilde{\bff})^\top \nabla_\bx \phi_\lambda\|\\
     &\le \frac{1}{2}\big(1+\tanh(a(\|\bx\|-r))\big) \|\bff_n(\bx)^\top\nabla_\bx \phi_\lambda\|\\
     &\le \frac{1}{2}\big(1 - \tanh(a\epsilon_2)\big) \beta,
\end{align*}
where $\beta = \sup_{\|\bx\|\le r-\epsilon_2}\|\bff_n(\bx)^\top\nabla_\bx \phi_\lambda\|$ is finite due to continuity of $\bff_n(\bx)^\top\nabla_\bx \phi_\lambda$ and compactness of the region $\|\bx\|\le r - \epsilon_2$. Thus, any $a > \frac{1}{\epsilon_2}\tanh^{-1}\big(1 - \frac{2\epsilon_1}{\beta}\big)$ leads to the inequality \eqref{eq:approx_bound}. This completes our proof.
\end{proof}
While proposition \ref{prop:main1} establishes that the eigenfunctions of the original and the transformed dynamical systems within some domain inside $\|\bx\|=r$ can be made arbitrarily close, the next proposition provides a closed-form approximation of the principal eigenfunctions of the transformed system.
 \begin{proposition}\textcolor{black}{\big[\textup{Eigenfunction boundary condition}\big]}\label{prop:main2}
     Consider again a system \eqref{eq:dynamics} and the transformation \eqref{eq:mod}. Given a fixed $\epsilon_1,\, \epsilon_1>0$ and eigenvalue $\lambda$ of matrix $\bA$, we can choose the constant $a>0$ such that
     \[
        \| \Tilde{\bff}^\top \nabla_\bx (\bw_\lambda^\top \bx) - \lambda \bw^\top_\lambda \bx\| \le \epsilon_1,
     \]
     whenever $\|\bx\|= r + \epsilon_2.$ In other words, $\tilde{\phi}_\lambda(\bx) = \bw_\lambda^\top \bx$ approximates the Koopman eigenfunction for the system $\dot{\bx}=\tilde{\bff}$ on the sphere $\|\bx\| = r + \epsilon_2$, if $a$ is sufficiently large.
 \end{proposition}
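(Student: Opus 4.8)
The plan is to mirror the proof of Proposition \ref{prop:main1}, exploiting the fact that \emph{outside} the ball of radius $r$ the transformed field is close to the linear field $\bA\bx$, for which $\bw_\lambda^\top\bx$ is an \emph{exact} eigenfunction. First I would rewrite the transformation \eqref{eq:mod} in terms of the nonlinear remainder. Since $\bA\bx - \bff(\bx) = -\bff_n(\bx)$, the construction \eqref{eq:mod} reads $\tilde{\bff}(\bx) = \bA\bx + (1-\sigma_r)\bff_n(\bx)$, where $\sigma_r = \tfrac{1}{2}\big(1+\tanh(a(\|\bx\|-r))\big)$.

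Next, using $\nabla_\bx(\bw_\lambda^\top\bx)=\bw_\lambda$ together with the left-eigenvector identity \eqref{linear_eig}, namely $\bw_\lambda^\top\bA = \lambda\bw_\lambda^\top$, I would evaluate the residual directly:
\begin{align*}
\tilde{\bff}^\top\nabla_\bx(\bw_\lambda^\top\bx) - \lambda\bw_\lambda^\top\bx
&= \bw_\lambda^\top\big(\bA\bx + (1-\sigma_r)\bff_n(\bx)\big) - \lambda\bw_\lambda^\top\bx\\
&= (1-\sigma_r)\,\bw_\lambda^\top\bff_n(\bx).
\end{align*}
The linear contribution cancels exactly, so the entire residual reduces to the purely nonlinear remnant weighted by $1-\sigma_r$.

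It then remains to estimate the weight on the sphere $\|\bx\|=r+\epsilon_2$. There $\|\bx\|-r=\epsilon_2>0$, so $1-\sigma_r = \tfrac{1}{2}\big(1-\tanh(a\epsilon_2)\big)$, which tends to $0$ as $a\to\infty$. Setting $\gamma = \sup_{\|\bx\|=r+\epsilon_2}\|\bw_\lambda^\top\bff_n(\bx)\|$ — finite by continuity of $\bw_\lambda^\top\bff_n$ and compactness of the sphere — gives
\[
\|\tilde{\bff}^\top\nabla_\bx(\bw_\lambda^\top\bx) - \lambda\bw_\lambda^\top\bx\| \le \tfrac{1}{2}\big(1-\tanh(a\epsilon_2)\big)\,\gamma,
\]
so any $a > \tfrac{1}{\epsilon_2}\tanh^{-1}\!\big(1-\tfrac{2\epsilon_1}{\gamma}\big)$ forces the bound below $\epsilon_1$, completing the argument.

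I do not anticipate a genuine difficulty: the proof is the reflection of Proposition \ref{prop:main1} with the roles of \emph{inside} and \emph{outside} the radius-$r$ sphere exchanged. The only two points that warrant care are (i) confirming the exact cancellation of the linear term through the left-eigenvector relation \eqref{linear_eig}, which is what makes the residual collapse to a single weighted nonlinear term, and (ii) ensuring $\tanh^{-1}\!\big(1-2\epsilon_1/\gamma\big)$ is well defined; this fails only when $\epsilon_1\ge\gamma$, a degenerate case in which the claimed inequality already holds for every $a>0$.
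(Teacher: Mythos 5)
Your proof is correct, and it takes a genuinely different (and in fact sharper) decomposition than the paper's. The paper writes $\tilde{\bff} = (1-\sigma_r)\bff + \sigma_r \bA\bx$ and applies the triangle inequality, arriving at the two-term bound $(1-\sigma_r)\big(\|\bff^\top\bw_\lambda\| + \|\lambda\bw_\lambda^\top\bx\|\big)$, which on the sphere $\|\bx\|=r+\epsilon_2$ becomes $\frac{1}{2}\big(1-\tanh(a\epsilon_2)\big)\|\bw_\lambda\|\big(\|\lambda\|(r+\epsilon_2)+\sup_{\|\bx\|=r+\epsilon_2}\|\bff\|\big)$. You instead write $\tilde{\bff} = \bA\bx + (1-\sigma_r)\bff_n$ and use the left-eigenvector relation \eqref{linear_eig} to cancel the linear contribution exactly, so the residual collapses to the \emph{identity} $(1-\sigma_r)\,\bw_\lambda^\top\bff_n(\bx)$ with no triangle inequality needed; your constant $\gamma = \sup_{\|\bx\|=r+\epsilon_2}\|\bw_\lambda^\top\bff_n\|$ is tighter than the paper's, since $\bw_\lambda^\top\bff_n = \bw_\lambda^\top\bff - \lambda\bw_\lambda^\top\bx$ implies $\gamma$ never exceeds the paper's coefficient. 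Both arguments then rest on the identical mechanism --- on the sphere, $1-\sigma_r = \frac{1}{2}\big(1-\tanh(a\epsilon_2)\big)\to 0$ as $a\to\infty$ --- and conclude with the same threshold on $a$, so what your route buys is a cleaner bookkeeping (it mirrors the paper's proof of Proposition \ref{prop:main1}, which likewise reduces the error to a $\sigma$-weighted $\bff_n$ term) and a slightly stronger quantitative bound. Your closing remark on the degenerate case $\epsilon_1 \ge \gamma$, where $\tanh^{-1}\big(1-2\epsilon_1/\gamma\big)$ is undefined but any $a>0$ already suffices because $1-\sigma_r < \frac{1}{2}$ there, is a small rigor point that the paper's own proof (and its proof of Proposition \ref{prop:main1}, which has the same issue with $\beta$) leaves implicit.
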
 
 
\begin{proof}
We have
\begin{align}\label{eq:approx2}
    \begin{split}
        &\| \Tilde{\bff}^\top \nabla_\bx (\bw_\lambda^\top \bx) - \lambda \bw^\top_\lambda \bx\| \\
        &\le  (1-\sigma_r)\| \bff^\top \bw_\lambda\| + \|\sigma_r \bx^\top \bA^\top \bw_\lambda - \lambda \bw^\top_\lambda \bx \| \\
        &= (1-\sigma_r)\| \bff^\top \bw_\lambda\| + \|(\sigma_r - 1)\lambda \bw^\top_\lambda \bx \| \\
        &= (1-\sigma_r)\big( \| \bff^\top \bw_\lambda\| + \|\lambda \bw^\top_\lambda \bx\|\big)
    \end{split}
\end{align}
where we recall that $\sigma_r$ denotes $\sigma(\|\bx\|-r)$, which we take as $\frac{1}{2}\big(1+\tanh(a(\|\bx\|-r))\big).$ Thus, on the sphere $\|\bx\|=r + \epsilon_2$, inequality \eqref{eq:approx2} gives us
\begin{gather*}
    \| \Tilde{\bff}^\top \nabla_\bx (\bw_\lambda^\top \bx) - \lambda \bw^\top_\lambda \bx\| \le \\
    \frac{1}{2}\big(1-\tanh(a\epsilon_2)\big) \|\bw_\lambda\|\bigg( \|\lambda\|(r+\epsilon_2) + \sup_{\|\bx\|=r+\epsilon_2} \| \bff\|\bigg).
\end{gather*}
Thus, similar to proposition \ref{prop:main1}, we can make an appropriate choice of $a$ such that the right-hand side of the above inequality is less than $\epsilon_1.$ Our proof is now complete.
\end{proof}


While direct application of path-integral formula to saddle point systems may not always be possible, the key idea we propose here is to modify the original system just enough using \eqref{eq:mod}, so that its eigenfunctions approximate the original system's eigenfunctions in a certain region of interest within the state-space (proposition \ref{prop:main1}), while rendering path-integrals computationally amenable for the transformed system. Notice that, with the construction of vector fields $\Tilde{\bff}$, one can employ path-integrals to approximate the principal Koopman eigenfunctions in a domain inside $\|\bx\| = r$, by using the fact that the eigenfunction $\tilde{\phi}_\lambda(\bx) \approx \bw^\top_\lambda \bx$ on some sphere $\|\bx\|=R$ of radius $R>r$ (proposition \ref{prop:main2}). In other words, its corresponding nonlinear part satisfies the boundary condition $\tilde{h}_\lambda(\bx) \approx 0$ on $\|\bx\| = R.$ Thus, given any point inside $\|\bx\|<r$ if there exists a time $T(\bx) >0$ such that $\|s_T(\bx)\|=R$, we get
\[
h_\lambda(\bx) = \underbrace{\cancel{e^{-\lambda T}\tilde{h}_\lambda(s_T(\bx))}}_{\approx 0} + \int_0^{T(\bx)}  e^{-\lambda \tau}w^\top_\lambda \Tilde{\bff}_n(s_\tau(\bx))d\tau.
\]

\noindent Here, the flow-map $s_t(\cdot)$ corresponds to the transformed vector field $\tilde{\bff}$. \textcolor{black}{The assumption \ref{assum:equilibruim} is important here, since we require that} the trajectories reach $\|\bx\| = R$ from almost everywhere (a.e.) in some subset of $\|\bx\|<r.$ Otherwise, the time $T(x)$ may become ill-defined. 

\section{Numerical experiments}

This section demonstrates our approach for extending the path-integral formula towards computation of eigenfunction for saddle point systems using finite-time integrals. The first example illustrates how this work overcomes the limiting assumptions in \cite{deka2023path} for path-integral based eigenfunction computations using method A in section \ref{sec:path_int_saddle}. The second example focuses on a Hamiltonian system (which is a particular type of saddle point system) to highlight the practical utility of Koopman eigenfunction computations within the context of nonlinear optimal control, while showcasing method B. 

\subsection*{\textbf{Example A:} Path-integrals for saddle point systems with EDMD-estimated boundary conditions}
In this example, we consider a 2-dimensional dynamical system with states $\bx = [\bx_1, \bx_2]^\top$ as follows:
\begin{equation}\label{eq:analytical_vf}
\dot{\bx} = 
 \left[ \begin{array}{c}
 \frac{\left(7.5 \bx_2^2+5.0\right)\left(\bx_1^3+\bx_1+\sin \left(\bx_2\right)\right)+\left(-\bx_1+\bx_2^3+2 \bx_2\right) \cos \left(\bx_2\right)}{9 \bx_1^2 \bx_2^2+6 \bx_1^2+3 \bx_2^2+\cos \left(\bx_2\right)+2} \\
\\
\frac{2.5 \bx_1^3+2.5 \bx_1-\left(3 \bx_1^2+1\right)\left(-\bx_1+\bx_2^3+2 \bx_2\right)+2.5 \sin \left(\bx_2\right)}{9 \bx_1^2 \bx_2^2+6 \bx_1^2+3 \bx_2^2+\cos \left(\bx_2\right)+2}
\end{array}\right].
\end{equation}

The vector field for this system is smooth everywhere on $\mathbb{R}^2$, with a saddle point at the origin as shown in figure \ref{fig:Result1}(a). \textcolor{black}{One can verify that} the two principal eigenpairs of this system are given by
\begin{gather*}
\phi_{\lambda_1}= \bx_1 - 2\bx_2 - \bx_2^3,\quad  \lambda_1 = -1 \text{ and}\\
\phi_{\lambda_2} = \bx_1 + \sin(\bx_2) + \bx_1^3,\quad \lambda_2 = 2.5.
\end{gather*}
We note that the principal eigenpairs of this system do not follow the assumption regarding the spectral distribution or the assumption on the boundedness of $h_\lambda$ in \cite{deka2023path}, and thus the infinite horizon path-integral formula proposed in the prior work \cite{deka2023path} is not suitable for eigenfunction computations.

\begin{figure}[!htp]
    \centering
  \begin{minipage}[b]{\linewidth}
    \centering \hspace{-0.4cm}
    \includegraphics[width=0.5\linewidth]{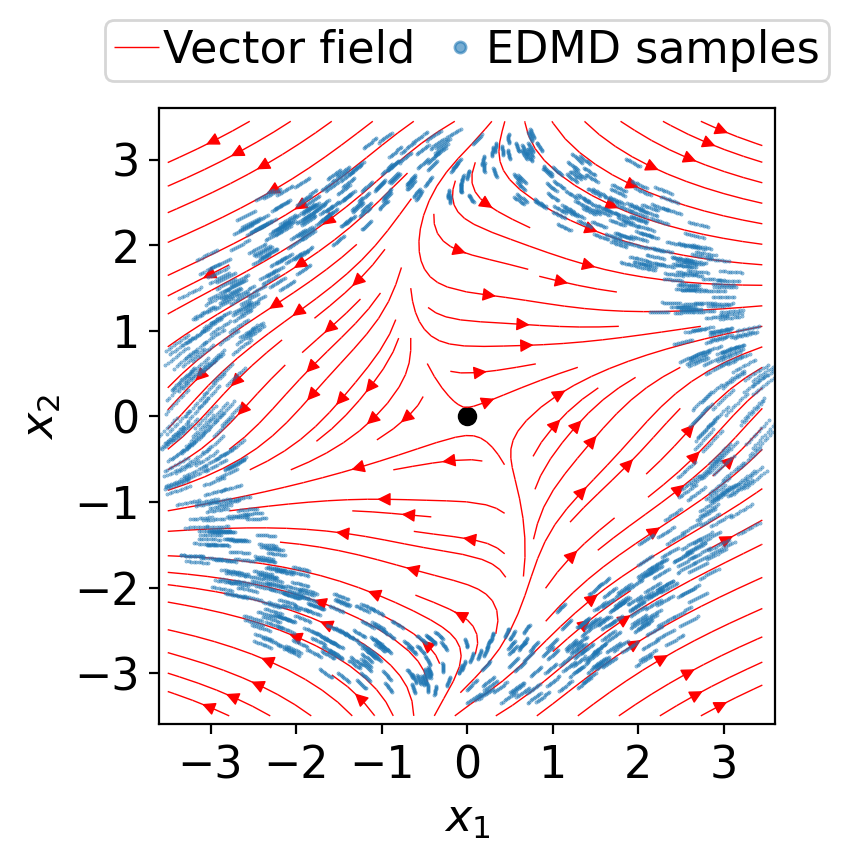} 
    \caption*{(a) Saddle point system, equation \eqref{eq:analytical_vf}. } 
  \end{minipage}\vspace{0.4cm}
 \begin{minipage}[b]{\linewidth}
    \hspace{-0.2cm}
    \includegraphics[width= 0.5\columnwidth]{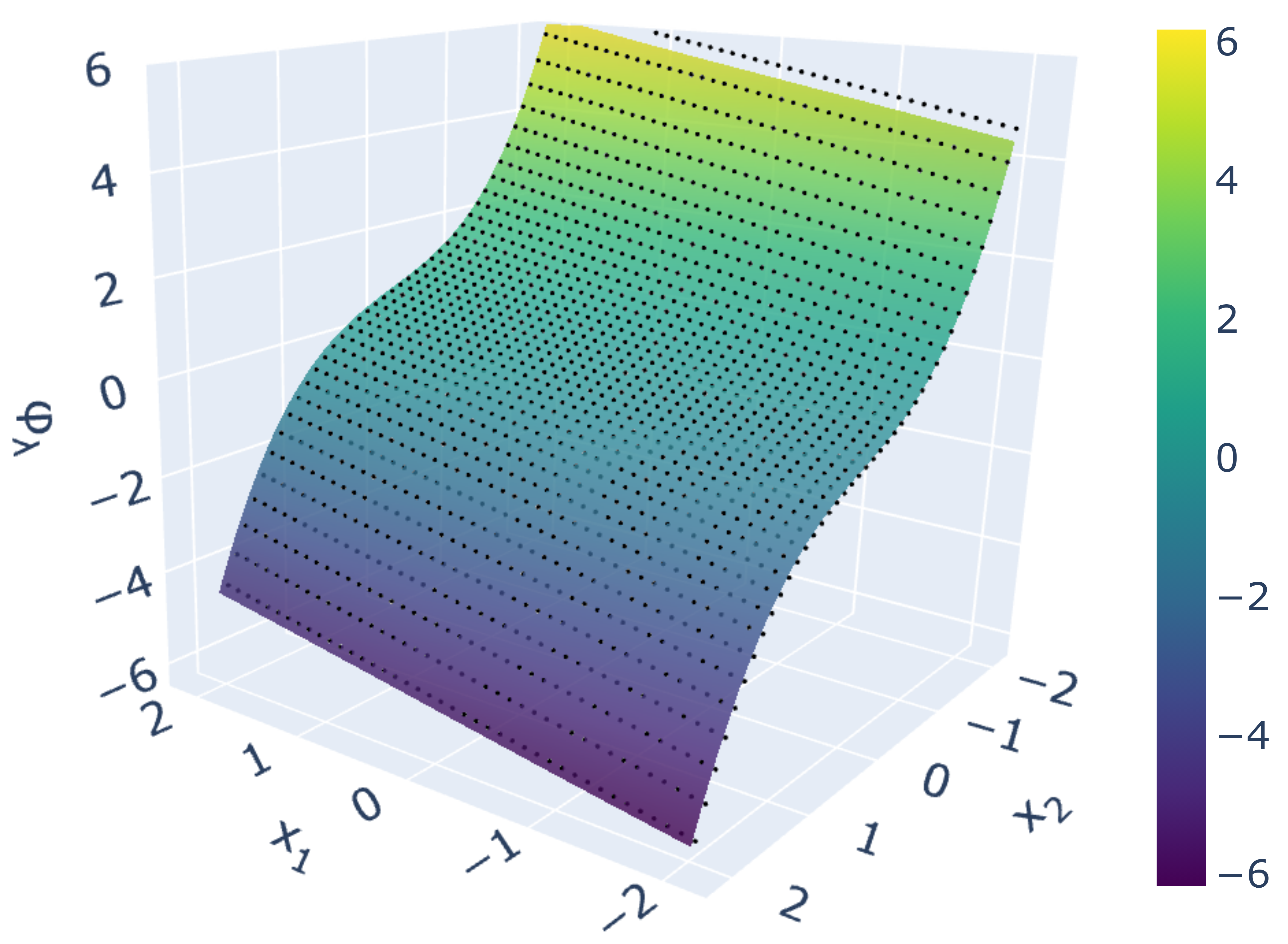}
    \hspace{-0.2cm}
    \includegraphics[width= 0.5\columnwidth]{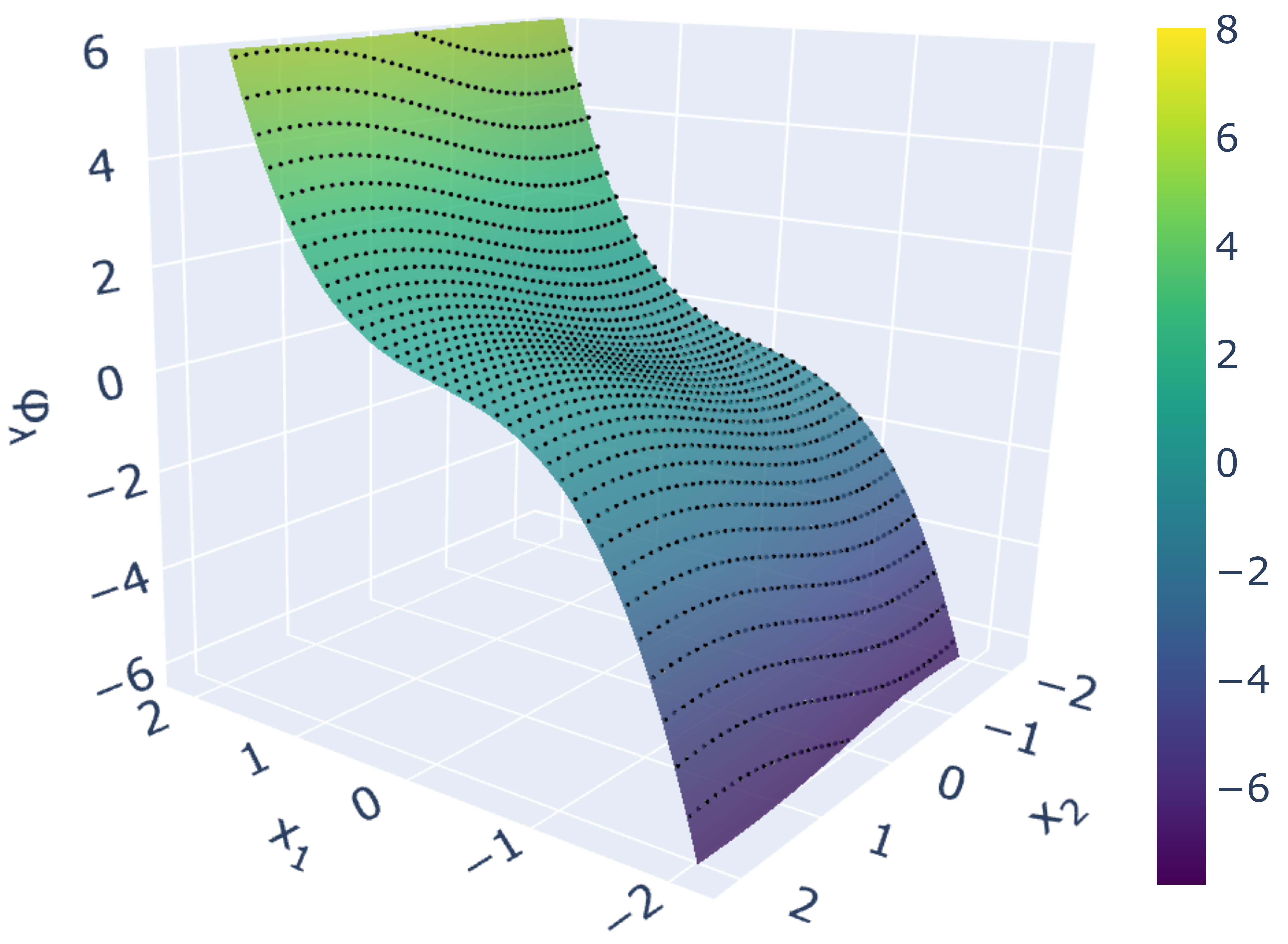}
    \caption*{(b) Computed (surface) vs actual (dots)
            eigenfunctions.} 
    \end{minipage}
\caption{Local-EDMD enabled path-integral computation for Koopman eigenfunctions. (a) Blue dots represent trajectory samples collected for EDMD, giving us an approximation of eigenfunctions in the annular disk. This estimate is then used for the terminal value $h_\lambda(s_t(\bx))$ appearing in the path-integral formula. (b) Left pane shows stable principal eigenfunction $\phi_1$ with $\lambda_1 = -1$ (Max relative error: 0.055). Right pane shows unstable principal eigenfunction $\phi_2$ with $\lambda_2 = 2.5$ (Max relative error: 0.039).}
\label{fig:Result1}
\end{figure}

In order to utilize the finite-time path-integrals in the computation of principal eigenfunctions on a compact set $C=[-2,2]^2$ from trajectory data, we first use EDMD to compute the eigenfunction values on the \textit{boundary} of a sphere of radius $3$ centered at the origin (i.e., $B=\left\{ \bx \, \vert \, \|\bx\|=3 \right\}$). For this preliminary EDMD step, we randomly sample $800$ trajectories of length $10$ timesteps, only from a small neighbourhood of $B$ given by $\mathcal{N}_B=\left\{\bx \, \vert \, \inf_{\bx' \in B} \|\bx-\bx'\| \le 0.4 \right\}$, as shown in figure \ref{fig:Result1}(a). We choose a monomial basis of maximum degree 10. We note that the EDMD estimated eigenfunctions from such a sampling are expected to be accurate only locally within $\mathcal{N}_B$, and not on the whole of $C=[-2,2]^2$. After extracting the nonlinear part of the local EDMD estimate, we apply \eqref{pde_soultion} to compute the eigenfunctions on the entire set $C$. For a given point $\bx\in C$, the integration time is taken to be $\min \left\{ t \, \vert \, \|\bs_t(\bx)\|=3\right\}$, that is, the integration is terminated when the trajectories reach the set $B.$ Figure \ref{fig:Result1}(b) demonstrate this for $\phi_{\lambda_1}$ and $\phi_{\lambda_2}$ respectively. 
Note that in order to estimate these entirely using EDMD, one would require trajectory samples from the entire domain $C$, and from our experimental comparison, we find that this data requirement is an order of magnitude higher than our current EDMD sampling on the set $\mathcal{N}_B$, and furthermore yields a larger (i.e., poorer) relative error between the estimated eigenfunction and the actual one on the domain $C$.

\subsection*{\textbf{Example B:} Path-integrals for Hamiltonian systems using transformed vector fields}
Let us consider an optimal control problem \cite{guo2022tutorial} as follows:
\begin{align}\label{eq:ocp}
\begin{split}
    &\text{Objective: }\min_{\bx_1,u} \int_0^\infty \left(\bx_1^2 + u^2\right)dt\\
    &\text{Dynamics: } \dot{\bx}_1 = -\bx_1^3 + u,
    \end{split}
\end{align}
\noindent where $\bx_1\in \mR$ is a scalar state and $u \in \mR$ is the control input. The solution to the optimal control problem can be obtained by solving the Hamilton Jacobi equation. In \cite{vaidya2022spectral}, a Koopman spectral approach is developed for solving optimal control problems. The solution to the Hamilton Jacobi equation is obtained from the Lagrangian submanifold. The main results of \cite{vaidya2022spectral} show that the Lagrangian submanifold can be obtained from the joint zero-level curve of the eigenfunction of the Koopman operator associated with the Hamiltonian system. The corresponding Hamiltonian system for \eqref{eq:ocp} is constructed using the Hamiltonian function defined by $H(\bx_1,\bx_2,u) \doteq \bx_1^2 + u^2 -\bx_1^3\bx_2 + \bx_2u$, \textcolor{black}{where $\bx_2 \in \mR$ is the co-state variable}. The Hamiltonian is minimized w.r.t. $u$ to obtain $u^\star=-\frac{1}{2}\bx_2$. Substituting for optimal $u^\star$ in the Hamiltonian, we obtain the pre-Hamiltonian $\bar H(\bx_1,\bx_2, u^\star(\bx))$. The Hamiltonian system associated with $\bar H$ is then given by:
\begin{equation}\label{eq:Ham_example}
    \left[\begin{array}{cc}
         \dot{\bx}_1(t)  \\
         \dot{\bx}_2(t) 
    \end{array}\right] = 
    \left[\begin{array}{cc}
         - \bx_1^3(t) - \frac{1}{2}\bx_2(t)\\
         -2\bx_1(t) + 3 \bx_1^2(t)\bx_2(t)
    \end{array}\right] = \bff(\bx),
\end{equation}
 Clearly, $\bx=[\bx_1,\bx_2]^\top = 0$ is a saddle point equilibrium with eigenvalues of $\bA=\nabla_\bx \bff(0)$ being $\lambda  = \pm 1$. We choose the function $\sigma(\cdot)$ as $\frac{1}{2}\left[1 + \tanh(10(\cdot))\right]$, with $r=4$ while constructing the modified vector field $\Tilde{\bff}$ using equation \eqref{eq:mod}. Path-integral computations performed directly on system \eqref{eq:Ham_example} do not converge to a finite value (except on the stable manifold). However, for the dynamics $\Tilde{\bff}$, this issue isn't observed and the path-integrals converge before the trajectories escape to infinity.

Figure \ref{fig:Ham_eig}(a) shows the original vector field \eqref{eq:Ham_example}, and how we transform it such that the nonlinear part of the eigenfunction corresponding to the transformed vector field is approximately zero beyond $r=4.$ The eigenfunctions can then be computed using path-integrals with such a transformation, as shown in \ref{fig:Ham_eig}(b).

\begin{figure}[ht] 
  \begin{minipage}[b]{0.5\linewidth}
    \centering
    \includegraphics[width=\linewidth]{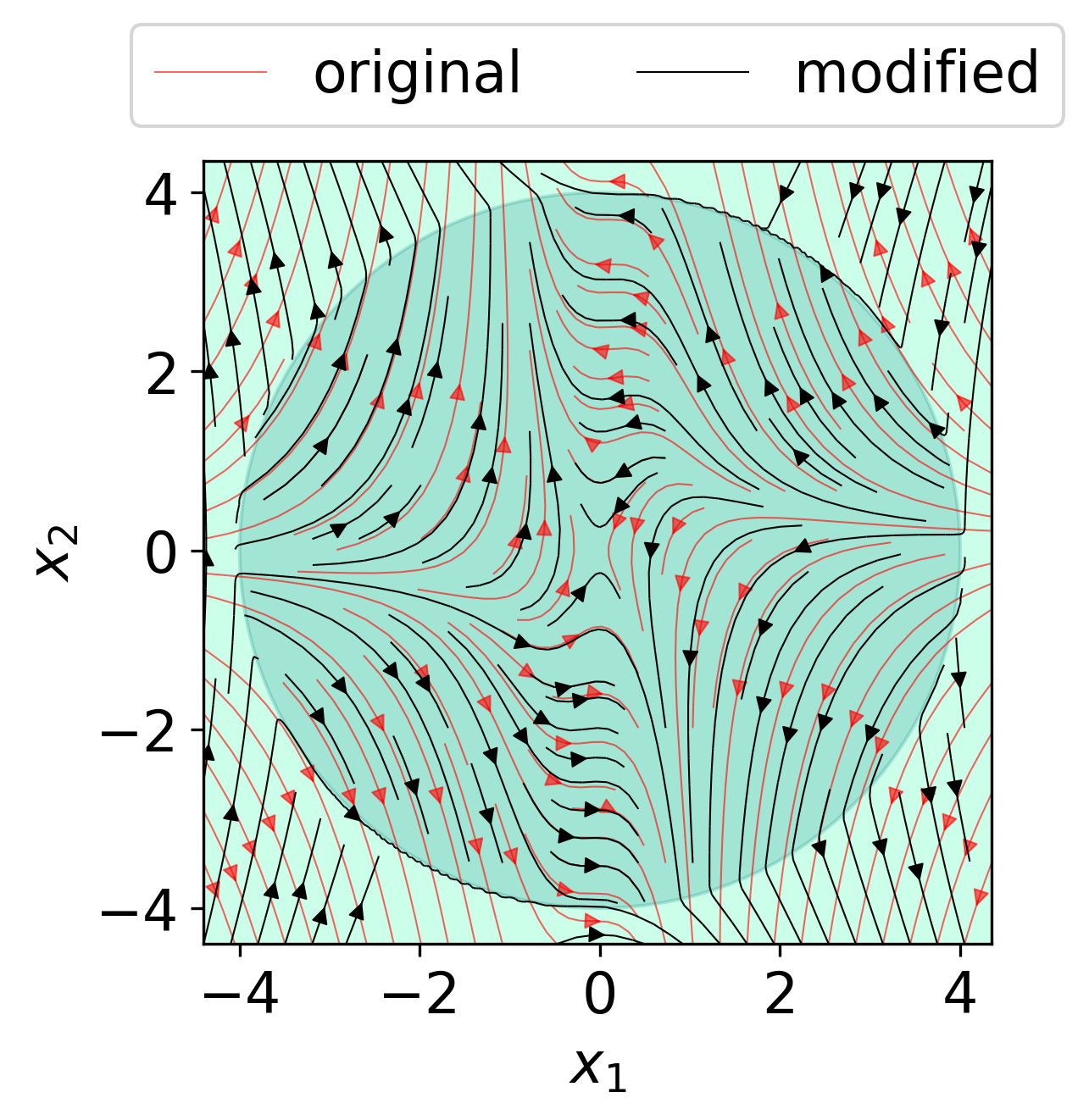} 
    \caption*{\hspace{1em}(a)} 
    \vspace{4ex}
  \end{minipage}
  \begin{minipage}[b]{0.5\linewidth}
    \centering \hspace{-0.4cm}\vspace{0.5cm}
    \includegraphics[width=1.05\linewidth]{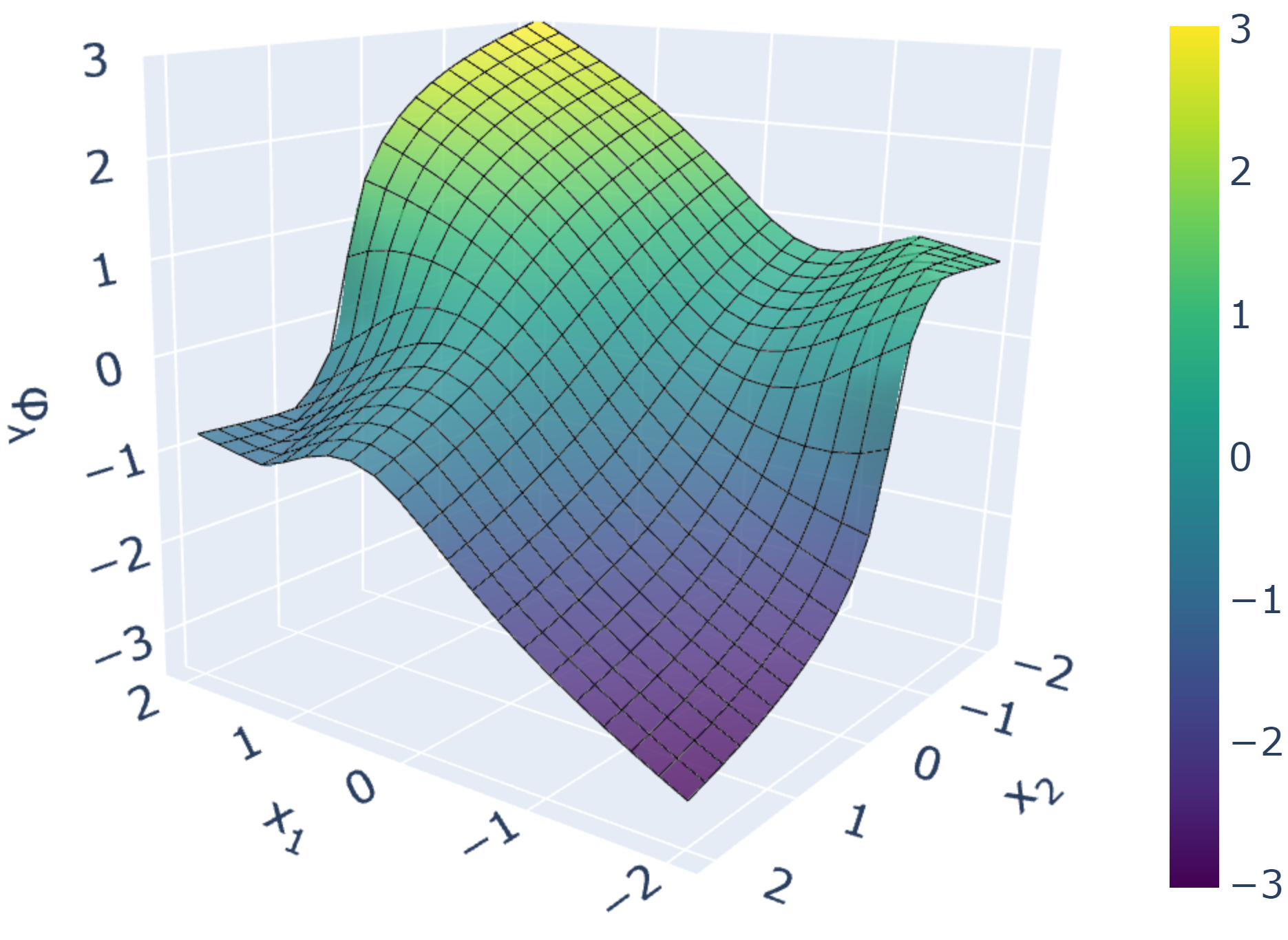} 
    \caption*{\hspace{0.5em}(b)} 
    \vspace{4ex}
  \end{minipage} 
  \vspace{-1cm}
  \caption{(a) Setting path-integral boundary conditions. The darker disk of radius $r=4$ is where the two vector fields are approximately alike, and consequently their eigenfunctions restricted to $\|\bx\|<r$ are also alike. (b) Computed unstable Koopman eigenfunction ($\lambda=1$). Path-integrals along trajectories starting inside $[-2,2]\times[-2,2]$ converge after a finite number of integration steps. Alternatively, one can choose to integrate until a given trajectory leaves the $r=4$ disk.}
    \label{fig:Ham_eig}
\end{figure}

The zero level-set of the unstable Koopman eigenfunctions of \eqref{eq:Ham_example} can be expressed as a manifold of the form $\mathcal{M}_0 \doteq \left\{ (\bx_1,\bx_2) \; \vert \; \bx_2 = \gamma(\bx_1),  \bx_1\in\mR\right\}$ \textcolor{black}{for some function $\gamma:\mathbb{R}\rightarrow \mathbb{R}$}, and the optimal trajectories of problem \eqref{eq:ocp} evolve in this manifold. The $u$ that minimizes $H(\bx_1,\bx_2,u)$ is $u^*(\bx_1,\bx_2) = -0.5\bx_2$ such that $(\bx_1,\bx_2)\in \mathcal{M}_0.$ In other words, $u^*(\bx_1,\gamma(\bx_1)) = u^*(\bx_1) = -0.5 \gamma(\bx_1)$ is the optimal controller for \eqref{eq:ocp}. Figure \ref{fig:optimal_u} shows the optimal control for problem \eqref{eq:ocp} computed from the unstable eigenfunction of system \eqref{eq:Ham_example} corresponding to $\lambda=1$. For comparison, we also plot the actual analytically derived optimal controller.

\begin{figure}
    \centering
\includegraphics[width=0.65\columnwidth]{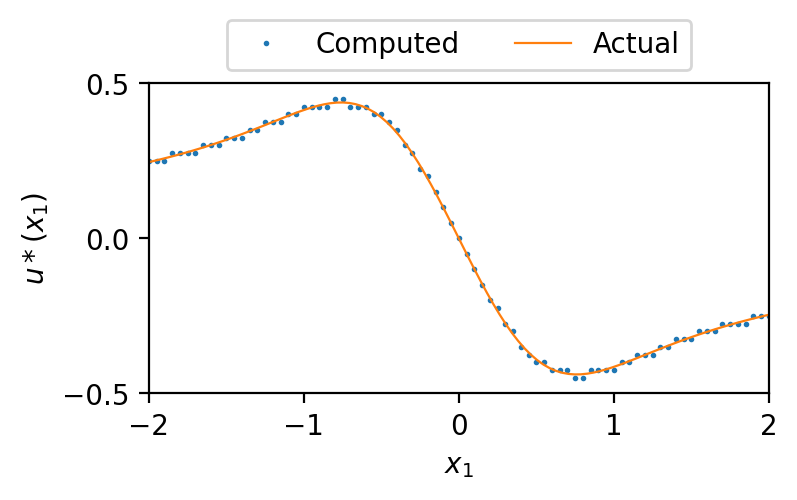}\vspace{-0.2cm}
    \caption{Optimal control extracted from the zero level-set of the unstable Koopman eigenfunction. The actual optimal control is obtained from \cite{guo2022tutorial} as $\bx_1^3 - \bx_1\sqrt{1 + \bx_1^4}.$}
    \label{fig:optimal_u}
    \vspace{-0.5cm}
\end{figure}

\section{Conclusion}
Recently developed path-integral based computation of principal Koopman eigenfunctions has shown promise in obtaining exact results since they are based on analytical expressions. However until now, their computation have been shown to converge only for specific systems, where the equilibrium is stable or anti-stable, and when the principal eigenvalues satisfy certain conditions on their spectral distribution. This paper removes these restrictions, and extends the path-integral approach to systems with saddle point equilibrium. Furthermore, in our proposed extension, the path-integrals need to be computed only over a finite horizon as opposed to the previous work, where the time horizon is infinite. The practical utility of this extension demonstrated using numerical examples, including utility in solving optimal control problems using our path-integral method.

\bibliographystyle{ieeetr}

\end{document}